\definecolor{bl}{rgb}{0.0,0.2,0.6} 
\numberwithin{equation}{section}
\numberwithin{figure}{section}
\theoremstyle{plain}
\newtheorem{thm}{\theoremname}
\theoremstyle{definition}
\newtheorem{example}[thm]{\examplename}
\theoremstyle{remark}
\newtheorem{rem}[thm]{\remarkname}
\theoremstyle{plain}
\newtheorem{lem}[thm]{\lemmaname}
\providecommand{\lemmaname}{Lemma}
  \providecommand{\remarkname}{Remark}
  \providecommand{\examplename}{Example}
\providecommand{\theoremname}{Theorem}
\newcommand{\arginf}{\operatornamewithlimits{arginf}}
\def\printtitle{%						% Define command: \printtitle
    {\color{bl} \centering \huge \sc \textbf{\@title}\par}}		% Typesetting
\title{Pricing Bermudan options via multi-level approximation methods 
\vspace{10pt}
\footnote{\footnotesize This research was partially supported by the Deutsche
      Forschungsgemeinschaft through the SPP 1324 ``Mathematical methods for extracting quantifiable     information from complex systems'' and   by
Laboratory for Structural Methods of Data Analysis in Predictive Modeling, MIPT, RF government grant, ag. 11.G34.31.0073.}
}
\def\printauthor{%					% Define command: \printauthor
    {\centering \small \@author}}				% Typesetting
\author{%
	Denis Belomestny, Fabian Dickmann and Tigran Nagapetyan \\
	Duisburg-Essen University and Fraunhofer ITWM \\
	\vspace{20pt}
	}
\begin{document}
%%% Top of the page: Author, Title and Abstact
\printtitle

\printauthor
\begin{abstract}
In this article we propose a novel approach to reduce the computational
complexity of various approximation methods for pricing discrete
time American or Bermudan options. Given a sequence of continuation values estimates
corresponding to different levels of spatial approximation, we propose a multi-level low biased estimate for the
price of the option. It turns out that the resulting complexity
gain can be of order \ensuremath{\varepsilon^{-1}}
 with \ensuremath{\varepsilon}
 denoting the desired precision. The performance of the proposed multilevel
algorithms is illustrated by a numerical example. 
\end{abstract}

%%% Start of the 'real' content of the article, using a two column layout
\section{Introduction}
Pricing of an American option usually reduces to solving an optimal stopping
problem that can be efficiently solved in low dimensions via dynamic
programming algorithm. However, many problems arising in practice
(see e.g. \citet{Gl}) have high dimensions, and these applications
have motivated the development of Monte Carlo methods for pricing
American option. Pricing American style derivatives via Monte Carlo
is a challenging task, because it requires the backwards dynamic programming
algorithm that seems to be incompatible with the forward structure
of Monte Carlo methods. In recent years much research was focused on the development
of fast methods to compute approximations to the optimal exercise
policy. Eminent examples include the functional optimization approach
of \citet{A}, the mesh method of \citet{BG}, the regression-based
approaches of \citet{Car}, \citet{LS}, \citet{TV}, \citet{E} and
\citet{B1}. The complexity of the fast approximations algorithms
depends on the desired precision \ensuremath{\varepsilon}
 in a quite nonlinear way that, in turn, is determined by some fine
properties of the underlying exercise boundary and the continuation
values (see, e.g., \citet{B1}). In some situations (e.g. in the case
of the stochastic mesh method or local regression) this complexity
is of order \ensuremath{\varepsilon^{-3}},
 which is rather high. One way to reduce the complexity of the fast
approximation methods is to use various variance reduction methods.
However, the latter methods are often ad hoc and, more importantly,
do not lead to provably reduced asymptotic complexity. In this paper
we propose a generic approach which is able to reduce the order of
asymptotic complexity  and which is applicable to various fast
approximation methods, such as global regression, local regression
or stochastic mesh method. The main idea of the method is inspired
by the pathbreaking work of \citet{Gi} that introduced a multilevel
idea into stochastics. As  similar to
the recent work of \citet{BSD}, we consider not  levels corresponding
to different discretization steps, but  levels related to different
degrees of approximation of the continuation values. For example,
in the case of the Longstaff-Schwartz algorithm, the latter degree is basically governed
by the number of basis functions and in the case of the mesh method by the number of training paths used to approximate the continuation values. The new multi-level approach is able
to significantly reduce the complexity of the fast approximation methods
leading in some cases to the complexity gain of order \ensuremath{\varepsilon^{-1}.}
 The paper is organised as follows. In Section~\ref{mainsetup} the pricing problem is formulated, the main assumptions are introduced and illustrated.  In Section~\ref{compl_anal} the complexity analysis of a generic approximation algorithm is carried out.  The main multi-level Monte Carlo algorithm is introduced in  Section~\ref{compl_anal_ml} where also its complexity is studied. In Section~\ref{num_example} we numerically test our approach for the problem of pricing  Bermudan max-call options via mesh method. The proofs are collected in Section~\ref{proofs}.

\section{Main setup}
\label{mainsetup}
An American option grants the holder the right to select the time
at which to exercise the option, and in this differs from a European
option that may be exercised only at a fixed date. A general class
of American option pricing problems can be formulated through an \ensuremath{\mathbb{R}^{d}}
 Markov process \ensuremath{\{X_{t},\,0\leq t\leq T\}}
 defined on a filtered probability space \ensuremath{(\Omega,\mathcal{F},(\mathcal{F}_{t})_{0\leq t\leq T},\mathrm{P})}
. It is assumed that the process \ensuremath{(X_t)}
 is adapted to \ensuremath{(\mathcal{F}_{t})_{0\leq t\leq T}}
 in the sense that each \ensuremath{X_{t}}
 is \ensuremath{\mathcal{F}_{t}}
 measurable. Recall that each \ensuremath{\mathcal{F}_{t}}
 is a \ensuremath{\sigma}
-algebra of subsets of \ensuremath{\Omega}
 such that \ensuremath{\mathcal{F}_{s}\subseteq\mathcal{F}_{t}\subseteq\mathcal{F}}
 for \ensuremath{s\leq t.} We restrict attention to options admitting a finite set of exercise
opportunities \ensuremath{0=t_{0}<t_{1}<t_{2}<\ldots<t_{\mathcal{J}}=T,}  called Bermudan options. Then 
\[
Z_{j}:=X_{t_{j}},\quad j=0,\ldots,\mathcal{J},
\]
is a Markov chain. If exercised at time \ensuremath{t_{j},\, j=1,\ldots,\mathcal{J}}, the option pays \ensuremath{g_{j}(Z_{j})}, for some known functions \ensuremath{g_{0},g_{1},\ldots,g_{\mathcal{J}}}
 mapping \ensuremath{\mathbb{R}^{d}}
 into \ensuremath{[0,\infty)}. Let \ensuremath{\mathcal{T}_{j}}
 denote the set of stopping times taking values in \ensuremath{\{j,j+1,\ldots,\mathcal{J}\}}. A standard result in the theory of contingent claims states that
the equilibrium price \ensuremath{V_{j}(z)}
 of the Bermudan option at time \ensuremath{t_{j}}
 in state \ensuremath{z}
 given that the option was not exercised prior to \ensuremath{t_{j}}
 is its value under an optimal exercise policy: 
\begin{eqnarray*}
V_{j}^{*}(z)=\sup_{\tau\in\mathcal{T}_{j}}\mathrm{E}[g_{\tau}(Z_{\tau})|Z_{j}=z],\quad z\in\mathbb{R}^{d}.
\end{eqnarray*}
A common feature of all fast approximation algorithms is that they
deliver estimates \ensuremath{C_{k,0}(z),\ldots,C_{k,\mathcal{J}-1}(z)}
 for the so called continuation values: 
\begin{eqnarray}
C_{j}^{*}(z):=\mathrm{E}[V_{j+1}^{*}(Z_{j+1})|Z_{j}=z],\quad j=0,\ldots,\mathcal{J}-1,\label{CV}
\end{eqnarray}
based on the set of trajectories \ensuremath{(Z_{0}^{(i)},\ldots,Z_{\mathcal{J}}^{(i)}),}
 \ensuremath{i=1,\ldots,k,}
 all starting from one point, i.e., \ensuremath{Z_{0}^{(1)}=\ldots=Z_{0}^{(k)}.}
 In the case of the so-called regression methods and the mesh method, the estimates for
the continuation values are obtained via the recursion (\textit{dynamic programming
principle}): 
\begin{eqnarray*}
C_{\mathcal{J}}^{*}(z) & = & 0,\\
C_{j}^{*}(z) & = & \mathrm{E}[\max(g_{j+1}(Z_{j+1}),C_{j+1}^{*}(Z_{j+1}))|Z_{j}=z]
\end{eqnarray*}
combined with Monte Carlo: at \ensuremath{(\mathcal{J}-j)}
th step one estimates the expectation
\begin{equation}
\mathrm{E}[\max(g_{j+1}(Z_{j+1}),C_{k,j+1}(Z_{j+1}))|Z_{j}=z]\label{regr_aim}
\end{equation}
via regression (global or local) based on the set of paths 
\[
(Z_{j}^{(i)},C_{k,j+1}(Z_{j+1}^{(i)})),\quad i=1,\ldots,k,
\]
where \ensuremath{C_{k,j+1}(z)}
 is the estimate for \ensuremath{C_{j+1}^{*}(z)}
 obtained in the previous step. 
 \par
 Based on the estimates \ensuremath{C_{k,0}(z),\ldots,C_{k,\mathcal{J}-1}(z)}
 we can construct a lower bound (low biased estimate) for \ensuremath{V_{0}^{*}}
 using the (generally suboptimal) stopping rule: 
\begin{eqnarray*}
\tau_{k}=\min\{0\leq j\leq \mathcal{J}:g_{j}(Z_{j}) \geq C_{k,j}(Z_{j})\}
\end{eqnarray*}
with \ensuremath{C_{k,\mathcal{J}}\equiv0}
by definition. Fix now a natural number \ensuremath{n}
and simulate \ensuremath{n} new independent
 trajectories of the process \ensuremath{Z.}
 A low-biased estimate for \ensuremath{V_{0}^{*}}
 can be then defined as 
\begin{eqnarray}
\label{mc}
V_{0}^{n,k}=\frac{1}{n}\sum_{r=1}^{n}g_{\tau_{k}^{(r)}}(Z_{\tau_{k}^{(r)}}^{(r)})
\end{eqnarray}
with
\[
\tau_{k}^{(r)}=\inf\{0\leq j\leq\mathcal{J}:g_{j}(Z_{j}^{(r)})\geq C_{k,j}(Z_{j}^{(r)})\}.
\]
Thus any fast approximation  approximation algorithm can be viewed as consisting of the following two steps.
\begin{description}
\item [Step 1] Construction of the estimates \( C_{k,j}, \) \( j=1,\ldots,J, \) on \( k \) training paths. 
\item [Step 2] Construction of the low-biased estimate \( V_{0}^{n,k} \) by evaluating functions \(  C_{k,j}, \) \( j=1,\ldots, J, \) on each of new \( n \) testing trajectories.
\end{description}
\par
Let us now consider a generic family of the continuation values estimates \ensuremath{C_{k,0}(z),\ldots,C_{k,\mathcal{J}-1}(z)}
 with the natural number \ensuremath{k}
 determining the quality of the estimates as well as their complexity.
In particular we make the following assumptions. 
\begin{description}
\item [{(AP)}] For any \ensuremath{k\in\mathbb{N}}
 the estimates \ensuremath{C_{k,0}(z),\ldots,C_{k,\mathcal{J}-1}(z)}
 are defined on some probability space \ensuremath{(\Omega^{k},\mathcal{F}^{k},\mathrm{P}^{k})}
 which is independent of \ensuremath{(\Omega,\mathcal{F},\mathrm{P}).}
 \item [{(AC)}] For any \ensuremath{j=1,\ldots,\mathcal{J},}
 the cost of constructing the estimate \( C_{k,j} \) on \(k\) training paths, i.e., \( C_{k,j}\bigl(Z_j^{(i)}\bigr), \)  \( i=1,\ldots, k, \)  is of order \( k\times k^{\varkappa_1} \) for some \ensuremath{\varkappa_1>0}
 and 
 the cost of evaluating $C_{k,j}(z)$ in a new point \( z\not\in \{Z_j^{(1)},\ldots, Z_j^{(k)}\} \)   is of order $k^{\varkappa_2}$
for some \( \varkappa_2>0. \)
 \end{description}

\begin{description}
\item [{(AQ)}] There is a sequence of positive real numbers \ensuremath{\gamma_{k}}
 with \ensuremath{\gamma_{k}\rightarrow0,}
 \ensuremath{k\to\infty}
 such that 
\[
\mathrm{P}^{k}\left(\sup_{z}\left\vert C_{k,j}(z)-C_{j}^{\ast}(z)\right\vert >\eta\sqrt{\gamma_{k}}\right)<B_{1}e^{-B_{2}\eta},\quad\eta>0
\]
for some constants \ensuremath{B_{1}>0}
 and \ensuremath{B_{2}>0.}
\end{description}
\paragraph{Discussion}
\begin{itemize}
\item Given (AC) the overall complexity of a fast approximation algorithm is proportional to
\begin{equation}
\label{compl_all}
k^{1+\varkappa_1}+n\times k^{\varkappa_2},
\end{equation}
where the first term in \eqref{compl_all} represents the cost of constructing  the estimates \( C_{k,j}, \) \( j=1,\ldots,J, \) on training paths and the second one gives the cost of evaluating the estimated  continuation values on \(n \) testing paths. 
\item Additionally, one usually has to take into account the cost of paths simulation. If the process  \( X \) solves a stochastic differential equation and the Euler discretisation scheme with  time step \( h \) is used to generate paths, then the term \(k\times h^{-1}+ n\times h^{-1}\) needs to be added to   \eqref{compl_all}. In order to make the analysis more focused and transparent we do not
take here the path generation costs into account.  
\end{itemize}
Let us now illustrate the above assumptions for three well known fast approximation methods. 
\begin{example}[Global regression] 
Fix a vector of real-valued functions \ensuremath{\psi=(\psi_{1},\ldots,\psi_{M})}
 on \ensuremath{\mathbb{R}^{d}.} Suppose that the estimate \(C_{k,j+1}\) is already constructed and has the form
\begin{equation*}
C_{k,j+1}(z)=\alpha_{j+1,1}^{k}\psi_{1}(z)+\ldots+\alpha_{j+1,M}^{k}\psi_{L}(z)
\end{equation*}
for some \((\alpha_{j+1,1}^{k},\ldots, \alpha_{j+1,M}^{k})\in \mathbb{R}^M.\)
Let \ensuremath{\boldsymbol{\alpha}_{j}^{k}=(\alpha_{j,1}^{k},\ldots,\alpha_{j,M}^{k})}
 be a solution of the following least squares optimization problem:
\begin{equation}
\arginf_{\boldsymbol{\alpha}\in\mathbb{R}^{M}}\sum_{i=1}^{k}\left[\zeta_{j+1,k}(Z_{j+1}^{(i)})-\alpha_{1}\psi_{1}(Z_{j}^{(i)})-\ldots-\alpha_{M}\psi_{M}(Z_{j}^{(i)})\right]^{2}\label{eq:glob_ls}
\end{equation}
with \ensuremath{\zeta_{j+1,k}(z)=\max\left\{ g_{j+1}(z),C_{k,j+1}(z)\right\},} where \(C_{k,j+1}\) is the estimate of \(C_{j+1}^{*}\) obtained in the previous step.
 Define the approximation for \ensuremath{C_{j}^{*}}
 via 
\[
C_{k,j}(z)=\alpha_{j,1}^{k}\psi_{1}(z)+\ldots+\alpha_{j,M}^{k}\psi_{M}(z),\quad z\in\mathbb{R}^{d}.
\]
It is clear that all estimates \ensuremath{C_{k,j}}
 are well defined on the cartesian product of \ensuremath{k}
 independent copies of \ensuremath{(\Omega,\mathcal{F},\mathrm{P}).}
 The complexity $\mathrm{comp}(\boldsymbol{\alpha}_{j}^{k})$ of computing
$\boldsymbol{\alpha}_{j}^{k}$ is of order \ensuremath{k\cdot M^{2}+\mathrm{comp}(\boldsymbol{\alpha}_{j+1}^{k}),}
 since each \ensuremath{\boldsymbol{\alpha}_{j}^{k}}
 is of the form \ensuremath{\boldsymbol{\alpha}_{j}^{k}=B^{-1}b}
 with 
\[
B_{p,q}=\frac{1}{k}\sum_{i=1}^{k}\psi_{p}(Z_{j}^{(i)})\psi_{q}(Z_{j}^{(i)})
\]
and 
\[
b_{p}=\frac{1}{k}\sum_{i=1}^{k}\psi_{p}(Z_{j}^{(i)})\zeta_{k,j+1}(Z_{j+1}^{(i)}),
\]
\ensuremath{p,q\in\{1,\ldots,M\}.}
Iterating backwardly in time we get $\mathrm{comp}(\boldsymbol{\alpha}_{j}^{k})\sim(\mathcal{J}-j)\cdot k\cdot M^{2}.$
Furthermore, it can be shown that the estimates \ensuremath{C_{k,0}(z),\ldots,C_{k,\mathcal{J}-1}(z)}
 satisfy the assumption (AQ) with \( \gamma_k=1/k, \)  provided \ensuremath{M}
 increases with \ensuremath{k} at polynomial rate, i.e., \( M=k^{\rho} \) for some \( \rho>0\) (see,
e.g., \citet{Z}). 
 Thus, the parameters \( \varkappa_1 \) and \( \varkappa_2 \) in (AC) are given by \( 2\rho \) and \( \rho, \)
 respectively.
\end{example}
\begin{example}[Local regression] Local polynomial regression estimates can be
defined as follows. Fix some \ensuremath{j}
 such that \ensuremath{0\leq j<\mathcal{J}}
 and suppose that we want to compute the expectation in \eqref{regr_aim}:
\begin{eqnarray*}
\mathrm{E}[\zeta_{j+1,k}(Z_{j+1})|Z_{j}=z],\quad z\in\mathbb{R}^{d}
\end{eqnarray*}
with \ensuremath{\zeta_{j+1,k}(z)=\max\left\{ g_{j+1}(z),C_{k,j+1}(z)\right\} .}
 For some \ensuremath{\delta>0}, \ensuremath{z\in\mathbb{R}^{d}}, an integer \ensuremath{l\geq0}
 and a function \ensuremath{K:\mathbb{R}^{d}\to\mathbb{R}_{+}}, denote by \ensuremath{q_{z,k}}
 a polynomial on \ensuremath{\mathbb{R}^{d}}
 of degree \ensuremath{l} (i.e. the maximal order of the multi-index is less than or equal to \ensuremath{l}) which minimizes 
\begin{equation}
\sum_{i=1}^{k}\left[\zeta_{j+1,k}(Z_{j+1}^{(i)})-q(Z_{j}^{(i)}-z)\right]^{2}K\left(\frac{Z_{j}^{(i)}-z}{\delta}\right)\label{OF}
\end{equation}
over the set of all polynomials \( q \) of degree \( l. \)
The local polynomial estimator of order \ensuremath{l}
 for  \ensuremath{C_{j}^{*}(z)}
 is then defined as \ensuremath{C_{k,j}(z)=q_{z,k}(0)}
 if \ensuremath{q_{z,k}}
 is the unique minimizer of \eqref{OF} and \ensuremath{C_{k,j}(z)=0}
 otherwise. 
 The value \ensuremath{\delta}
 is called a bandwidth and the function \ensuremath{K}
 is called a kernel function. In \citet{B1} it is shown that the
local polynomial estimates \ensuremath{C_{k,0}(z),\ldots,C_{k,\mathcal{J}-1}(z)}
 of degree \ensuremath{l}
 satisfy the assumption (AQ) with \( \gamma_k= k^{-2\beta/(2\beta+d)} \)  under \( \beta \)-H\"older smoothness of the continuation values \ensuremath{C^*_{0}(z),\ldots,C^{*}_{\mathcal{J}-1}(z)}, provided
\ensuremath{\delta=k^{-1/(2l+d)}.} Since in general the summation in \eqref{OF} runs over all \( k \) paths (see Figure \ref{fig_continuation}) we have \( \varkappa_1=1 \) and \( \varkappa_2=1 \)  in (AC).
 \end{example}
 \begin{figure}[tbh]
\centering
\includegraphics[width=0.75\textwidth]{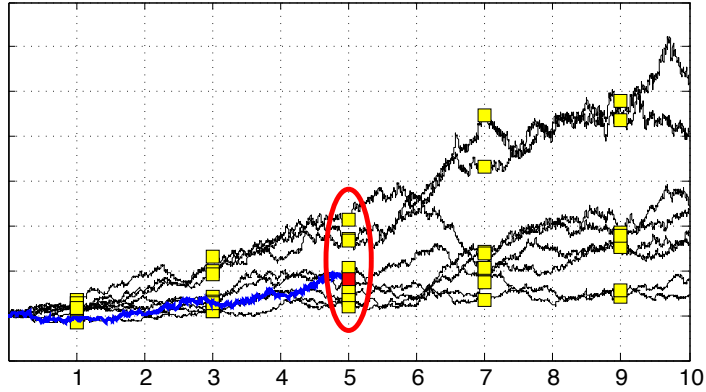}
\caption{Local regression and mesh methods: in order to compute the continuation value estimate  \( C_{k,5} \) in a point (red) lying on a testing path (blue), all \( k \) points (yellow) on training paths at time \( 5 \) have to be used. }
\label{fig_continuation}
\end{figure}
\begin{example}[Mesh Method]
\label{mesh_ex}
In the mesh method of \citet{BG4} the continuation value \( C^*_j \) at a point \( z \) is approximated via
\begin{eqnarray}
\label{mesh_est}
C_{k,j}(z)=\frac{1}{k}\sum_{i=1}^{k}\zeta_{k,j+1}(Z_{j+1}^{(i)})\cdot w_{ij}(z),
\end{eqnarray}
 where  \( \zeta_{k,j+1}(z)=\max\left\{ g_{j+1}(z),C_{k,j+1}(z)\right\} \) and 
\begin{eqnarray*}
w_{ij}(z)=\frac{p_j(z,Z_{j+1}^{(i)})}{\frac{1}{k}\sum_{l=1}^{k} p_j(Z_{j}^{(l)},Z_{j+1}^{(i)})},
\end{eqnarray*}
where \( p_j(x,\cdot) \) is the conditional density of  \( Z_{j+1} \) given \(Z_j=x.\)
Again the summation in \eqref{mesh_est} runs over all \( k \) paths. Hence \( \varkappa_1=1 \) in (AC) and for any \( j=0,\ldots, \mathcal{J}-1, \) the complexity of computing \ensuremath{C_{k,j}(z)}
in a point \( z \) not belonging to the set of training trajectories is of order \ensuremath{k} (see Figure \ref{fig_continuation}), provided the transition density  \( p_j(x,y) \) is analytically known. For assumption (AQ) see, e.g., 
\citet{AJ}.
\end{example}

 \section{Complexity analysis of \ensuremath{V_{0}^{n,k}}
}
\label{compl_anal}
We shall use throughout the notation $A\lesssim B$ if $A$ is
bounded by a constant multiple of $B$, independently of the
parameters involved, that is, in the Landau notation $A=O(B)$.
Equally $A\gtrsim B$ means $B\lesssim A$ and $A\thicksim B$ stands
for $A\lesssim B$ and $A\gtrsim B$ simultaneously.
\par
In order to carry out the complexity analysis of the estimate \eqref{mc}
we need the so-called ``margin'' or boundary assumption. 
\begin{description}
\item [{(AM)}] There exist constants $A>0$, \ensuremath{\delta_{0}>0}
 and $\alpha>0$ such that
\[
\mathrm{P}\left(|C_{j}^{*}(Z_{j})-g_j(Z_{j})|\leq\delta\right)\leq A\delta^{\alpha}
\]
for all \ensuremath{j=0,\ldots,\mathcal{J},}
 and all \ensuremath{\delta<\delta_{0}.}
 
\end{description}
\begin{rem}
Assumption (AM) provides a useful characterization of the behavior
of the continuation values $(C_{j}^{*})$ and payoffs \ensuremath{(g_{j})}
 near the exercise boundary \ensuremath{\partial\mathcal{E}}
 with 
\begin{eqnarray*}
\mathcal{E}=\left\{ (j,x):g_{j}(x)\geq C_{j}^{*}(x)\right\} .
\end{eqnarray*}
In the situation when all functions \ensuremath{C_{j}^{*}-g_{j},\, j=0,\ldots,\mathcal{J}-1,}
 are smooth and have non-vanishing derivatives in the vicinity of
the exercise boundary, we have \ensuremath{\alpha=1}. Other values of \ensuremath{\alpha}
 are possible as well, see \citet{B1}. 
\end{rem} 
Let us now turn to the properties of the estimate \(V_{0}^{n,k}.\) While the variance of the
estimate \ensuremath{V_{0}^{n,k}}
 is given by 
\begin{equation}
\label{eq:var_smc}
\operatorname{Var}[V_{0}^{n,k}]=\operatorname{Var}[g_{\tau_{k}}(Z_{\tau_{k}})]/n,
\end{equation}
its bias is analyzed in the following theorem. 
\begin{thm}
\label{thm:bias}Suppose that (AP), (AM) and (AQ) hold with some \ensuremath{\alpha>0}, and all functions \ensuremath{g_{j}}
 are uniformly bounded, i.e., 
\[
|g_{j}(x)|\leq G, \quad x \in\mathbb{R}^{d}.
\]
for some constants \ensuremath{G>0}.
Then it holds 
\[
\left|V_{0}^{*}-\mathrm{E}[V_{0}^{n,k}]\right|\lesssim\gamma_{k}^{(1+\alpha)/2},\quad k\to\infty.
\]
\end{thm}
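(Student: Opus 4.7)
Since by (AP) the estimates $C_{k,j}$ are built on a probability space independent of the new $n$ testing trajectories, conditional on $\mathcal{F}^k$ the stopping rules $\tau_k^{(r)}$ are genuine stopping rules for the Markov chain $Z$, hence
\[
\mathrm{E}[V_{0}^{n,k}] = \mathrm{E}[g_{\tau_k}(Z_{\tau_k})] \leq V_{0}^{*},
\]
so the bias is non-negative and the variable $n$ plays no role. The entire task reduces to upper-bounding $V_0^\ast - \mathrm{E}[g_{\tau_k}(Z_{\tau_k})]$. The plan is to condition on the training sample, apply a pathwise ``disagreement'' decomposition, use the margin assumption (AM) to convert pointwise estimation errors into bias, and finally integrate out using the tail bound (AQ).

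The core deterministic step, valid for any data-driven stopping rule $\tau_k$ and any realisation of $(C_{k,j})$, is the classical inequality
\[
0 \leq V_{0}^{\ast} - \mathrm{E}\bigl[g_{\tau_k}(Z_{\tau_k})\,\big|\,\mathcal{F}^k\bigr] \leq \sum_{j=0}^{\mathcal{J}-1} \mathrm{E}\Bigl[\bigl|C_{j}^{\ast}(Z_j)-g_j(Z_j)\bigr|\,\mathbf{1}_{D_{k,j}}\,\Big|\,\mathcal{F}^k\Bigr],
\]
where $D_{k,j}$ is the event that at time $j$ the sign of $g_j(Z_j) - C_{k,j}(Z_j)$ disagrees with the sign of $g_j(Z_j) - C_{j}^{\ast}(Z_j)$. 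This is proved by telescoping $V_j^\ast(Z_j) - \max(g_j(Z_j), C_j^\ast(Z_j)\cdot\mathbf{1}_{\tau_k>j} + \ldots)$ backwards in $j$ using the dynamic programming recursion for $V_j^\ast$; the contribution at each step is controlled by $|C_j^\ast - g_j|$ precisely on the disagreement event. On $D_{k,j}$ one automatically has
\[
\bigl|C_{j}^{\ast}(Z_j)-g_j(Z_j)\bigr| \;\leq\; \bigl|C_{k,j}(Z_j)-C_j^\ast(Z_j)\bigr| \;\leq\; \varepsilon_{k,j}, \qquad \varepsilon_{k,j} := \sup_z |C_{k,j}(z)-C_j^\ast(z)|.
\]

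Combining these two facts, and then taking expectation with respect to $Z$ (independent of $\mathcal{F}^k$), yields
\[
V_0^\ast - \mathrm{E}\bigl[g_{\tau_k}(Z_{\tau_k})\,\big|\,\mathcal{F}^k\bigr] \;\leq\; \sum_{j=0}^{\mathcal{J}-1} \varepsilon_{k,j}\cdot \mathrm{P}\bigl(|C_j^\ast(Z_j)-g_j(Z_j)|\leq\varepsilon_{k,j}\,\big|\,\mathcal{F}^k\bigr).
\]
The margin assumption (AM) then gives, provided $\varepsilon_{k,j}<\delta_0$, the bound $A\varepsilon_{k,j}^{1+\alpha}$ for each summand; the complementary event $\{\varepsilon_{k,j}\geq\delta_0\}$ contributes at most $2G\cdot\mathrm{P}(\varepsilon_{k,j}\geq\delta_0)$, which by (AQ) decays faster than any polynomial in $\gamma_k$ and is therefore absorbed.

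It remains to integrate over $\mathcal{F}^k$: by (AQ),
\[
\mathrm{E}^k\bigl[\varepsilon_{k,j}^{1+\alpha}\bigr] \;=\; \int_0^\infty (1+\alpha) t^\alpha \mathrm{P}^k(\varepsilon_{k,j}>t)\,dt \;\leq\; \gamma_k^{(1+\alpha)/2}(1+\alpha)B_1\int_0^\infty \eta^\alpha e^{-B_2\eta}\,d\eta \;\lesssim\; \gamma_k^{(1+\alpha)/2},
\]
and summing over the finitely many $j\in\{0,\ldots,\mathcal{J}-1\}$ finishes the argument. The main obstacle is really the first step: establishing the pathwise disagreement bound in a form that is compatible with the fact that $\tau_k$ can deviate from $\tau^\ast$ at several dates simultaneously. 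This must be handled by a careful backward induction on the dynamic programming equation for $V_j^\ast$, with all other steps being essentially routine once the disagreement inequality is in place.
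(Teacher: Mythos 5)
Your proposal is correct and follows essentially the same route as the paper: the disagreement decomposition over exercise dates (the paper's Lemma~\ref{lem:haupt_techn}, proved by exactly the backward telescoping you describe), the observation that $|C_j^*-g_j|\le|C_{k,j}-C_j^*|$ on the disagreement event, and the combination of (AM) with the tail bound (AQ). The only difference is bookkeeping in the final integration --- you condition on the training sample and compute $\mathrm{E}^k[\varepsilon_{k,j}^{1+\alpha}]$ by the layer-cake formula, while the paper peels $|g_j(Z_j)-C_j^*(Z_j)|$ into dyadic shells of width $\gamma_k^{1/2}$ --- and both yield the same bound $\gamma_k^{(1+\alpha)/2}$.
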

The next theorem gives an upper estimate for the complexity of \ensuremath{V_{0}^{n,k}.}
\begin{thm}
\label{compl_mc} Let assumptions (AP), (AC), (AQ) and (AM)  hold with
\[
\gamma_{k}=k^{-\mu},\quad k\in\mathbb{N}
\]
for some \ensuremath{\mu>0.}
Then for any  \( \varepsilon>0 \)  the choice $$k^* = \varepsilon^{-\frac{2}{\mu(1+\alpha)}}, \quad n^*=\varepsilon^{-2}$$ leads to
\[
\mathrm{E}\left[V_{0}^{n^*,k^*}-V_{0}^{*}\right]^{2}\leq\varepsilon^{2},
\]
and the complexity of the estimate \( V_{0}^{n^*,k^*} \) is bounded from above by  \ensuremath{\mathcal{C}_{n^*,k^*}(\varepsilon)} with 
\begin{equation}
\label{eq:comp_smc}
\mathcal{C}_{n^*,k^*}(\varepsilon)\lesssim\varepsilon^{-2\cdot\max\left(\frac{\varkappa_1+1}{\mu(1+\alpha)},\, 1+\frac{\varkappa_2}{\mu(1+\alpha)}\right)}, \quad \varepsilon\to 0.
\end{equation}
\end{thm}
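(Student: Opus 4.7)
The plan is to reduce the theorem to a routine bias--variance decomposition, so that all the substantive work sits in the already established bias bound of Theorem~\ref{thm:bias} and the complexity formula \eqref{compl_all}. Concretely, I would start by writing
\[
\mathrm{E}\bigl[V_{0}^{n,k}-V_{0}^{*}\bigr]^{2}
=\bigl(\mathrm{E}[V_{0}^{n,k}]-V_{0}^{*}\bigr)^{2}+\operatorname{Var}[V_{0}^{n,k}],
\]
which is valid because the training randomness (on $(\Omega^{k},\mathcal{F}^{k},\mathrm{P}^{k})$) is independent of the testing randomness by (AP), so the expectation and variance can be interpreted on the product space.

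For the first term I would invoke Theorem~\ref{thm:bias} together with the hypothesis $\gamma_{k}=k^{-\mu}$ to get $|V_{0}^{*}-\mathrm{E}[V_{0}^{n,k}]|\lesssim k^{-\mu(1+\alpha)/2}$. For the second term, the identity \eqref{eq:var_smc} combined with $|g_{j}|\leq G$ yields $\operatorname{Var}[V_{0}^{n,k}]\leq G^{2}/n$. Hence
\[
\mathrm{E}\bigl[V_{0}^{n,k}-V_{0}^{*}\bigr]^{2}\lesssim k^{-\mu(1+\alpha)}+n^{-1}.
\]
Plugging in $k^{*}=\varepsilon^{-2/(\mu(1+\alpha))}$ makes the first term $\lesssim\varepsilon^{2}$, and $n^{*}=\varepsilon^{-2}$ makes the second term $\lesssim\varepsilon^{2}$, giving the claimed MSE bound (up to an absorbed constant that can be made $\leq 1$ by rescaling $\varepsilon$).

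For the complexity bound I would appeal directly to \eqref{compl_all}, namely $\mathcal{C}_{n,k}\lesssim k^{1+\varkappa_{1}}+n\,k^{\varkappa_{2}}$, which was derived from (AC). Substituting $k^{*}$ and $n^{*}$ gives
\[
k^{*\,1+\varkappa_{1}}\lesssim \varepsilon^{-2(\varkappa_{1}+1)/(\mu(1+\alpha))},
\qquad
n^{*}\,k^{*\,\varkappa_{2}}\lesssim \varepsilon^{-2}\cdot\varepsilon^{-2\varkappa_{2}/(\mu(1+\alpha))}=\varepsilon^{-2(1+\varkappa_{2}/(\mu(1+\alpha)))},
\]
and taking the larger exponent yields \eqref{eq:comp_smc}.

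There is no real obstacle in this argument: the bias bound of Theorem~\ref{thm:bias} is where the margin condition (AM) and the uniform-convergence rate (AQ) were genuinely exploited, and (AC) is a plug-in bookkeeping step. The only points worth being careful about are (i) justifying that expectations and variances really decouple training from testing randomness, which follows from (AP) by conditioning on $\mathcal{F}^{k}$ before applying \eqref{eq:var_smc}, and (ii) noting that $k^{*}$ need only satisfy $k^{*}\gtrsim \varepsilon^{-2/(\mu(1+\alpha))}$ and $n^{*}\gtrsim\varepsilon^{-2}$ for the MSE conclusion, the displayed equalities being a choice that minimizes the resulting upper bound for $\mathcal{C}_{n^{*},k^{*}}(\varepsilon)$.
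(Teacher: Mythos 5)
Your proposal is correct and follows essentially the same route as the paper: the paper phrases the argument as minimizing the cost $k^{\varkappa_1+1}+n\,k^{\varkappa_2}$ subject to the bias constraint $\gamma_k^{(1+\alpha)/2}\le\varepsilon$ (from Theorem~\ref{thm:bias}) and the variance constraint $n\ge\varepsilon^{-2}$ (from \eqref{eq:var_smc}), which is exactly your bias--variance decomposition followed by substitution of $k^*$ and $n^*$. Your write-up is in fact more explicit than the paper's two-line proof, and your remark about absorbing the implied constants is a fair point the paper glosses over.
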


\paragraph{Discussion}
Theorem~\ref{compl_mc} implies that  the complexity of  \( V_{0}^{n^*,k^*} \) is always larger than \( \varepsilon^{-2}. \) In the case  \( \varkappa_1=1 \) and \( \varkappa_2=1 \)  (mesh method or local regression) we get 
\begin{equation}
\label{mc_compl_mesh}
\mathcal{C}_{n^*,k^*}(\varepsilon)\lesssim\varepsilon^{-2\max\left(\frac{2}{\mu(1+\alpha)},1+\frac{1}{\mu(1+\alpha)}\right)}
\end{equation}
Furthermore, in the most common case  \(  \alpha=1 \) the bound \eqref{mc_compl_mesh} simplifies to 
\begin{eqnarray*}
\mathcal{C}_{n^*,k^*}(\varepsilon)\lesssim\varepsilon^{-2\max\left(\frac{1}{\mu},1+\frac{1}{2\mu}\right)}.
\end{eqnarray*}
Since for all regression methods and the mesh method \(\mu\leq 1,\) the asymptotic complexity is always larger 
than \(\varepsilon^{-3}.\) In the next section we present a multilevel approach that can reduce the asymptotic complexity down to \(\varepsilon^{-2}\) in some cases.
\section{Multilevel approach}
\label{compl_anal_ml}
Fix some natural number \ensuremath{L}
 and let \ensuremath{\mathbf{k}=(k_0,k_{1},\ldots,k_{L})}
 and \ensuremath{\mathbf{n}=(n_0, n_{1},\ldots,n_{L})}
 be two sequences of natural numbers, satisfying \(k_0<k_1<\ldots<k_L\) and \(n_0>n_1>\ldots>n_L.\) Define 
\[
V_{0}^{\mathbf{n},\mathbf{k}}=\frac{1}{n_{0}}\sum_{r=1}^{n_{0}}g_{\tau^{(r)}_{k_{0}}}\Bigl(Z_{\tau^{(r)}_{k_{0}}}^{(r)}\Bigr)+\sum_{l=1}^{L}\frac{1}{n_{l}}\sum_{r=1}^{n_{l}}\left[g_{\tau^{(r)}_{k_{l}}}\Bigl(Z_{\tau^{(r)}_{k_{l}}}^{(r)}\Bigr)-g_{\tau^{(r)}_{k_{l-1}}}\Bigl(Z_{\tau^{(r)}_{k_{l-1}}}^{(r)}\Bigr)\right]
\]
with
\[
\tau_{k}^{(r)}=\inf\left\{0\leq j\leq\mathcal{J}:g_{j}({Z}_{j}^{(r)})\geq C_{k,j}({Z}_{j}^{(r)})\right\}, \quad k\in \mathbb{N},
\]
where for any \(l=1,\ldots, L,\) both estimates  \(C_{k_l,j}\) and \(C_{k_{l-1},j}\) are based on one  set of \(k_l\) training trajectories.
Let us analyse the properties of the estimate \(V_{0}^{\mathbf{n},\mathbf{k}}.\) First note that its bias   coincides with the bias of \(g_{\tau_{k_{L}}}(Z_{\tau_{k_{L}}})\) corresponding to the finest approximation level. As to the variance of \(V_{0}^{\mathbf{n},\mathbf{k}},\) it can be significantly reduced due  the use of ``good'' continuation value estimates  \(C_{k_{l-1},j}\) and \(C_{k_l,j}\) (that are both close to \(C^*_{j}\)) on the same set of testing trajectories in each level. In this way a ``coupling'' effect is achieved. 
The following theorem quantifies the above heuristics.  
\begin{thm}
\label{biasvarml}
Let (AP), (AQ) and (AM) hold with some \ensuremath{\alpha>0,}
 then the estimate \ensuremath{ V_{0}^{\mathbf{n},\mathbf{k}}}
 has the bias of order  \(\gamma_{k_L}^{(1+\alpha)/2}\)
 and the variance of order 
\[
\frac{\operatorname{Var}[g(X_{\tau_{k_{0}}})]}{n_{0}}+\sum_{l=1}^{L}\frac{\gamma_{k_{l-1}}^{\alpha/2}}{n_l}.
\]
Furthermore, under assumption (AC) the cost of \ensuremath{V_{0}^{\mathbf{n},\mathbf{k}}}
 is bounded from above by a multiple of 
\[
\sum\limits_{l=0}^L (k_l^{\varkappa_1+1} + n_l\cdot k_l^{\varkappa_2})
\]
\end{thm}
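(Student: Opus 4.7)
The argument decomposes into three independent pieces: (i) the bias reduces by telescoping to that of the finest single-level estimator, (ii) the variance decomposes across levels by independence and is controlled via a coupling argument combining (AM) and (AQ), and (iii) the cost bound follows directly from (AC).

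For the bias, note that by construction the $L+1$ batches of testing trajectories used at the different levels are mutually independent and independent of the training samples used to build the $C_{k_l,j}$. Taking expectations in the definition of $V_0^{\mathbf{n},\mathbf{k}}$ collapses the telescoping sum to $\mathrm{E}\bigl[g_{\tau_{k_L}}(Z_{\tau_{k_L}})\bigr]$, so the overall bias coincides with the single-level bias at depth $k_L$, which by Theorem~\ref{thm:bias} is of order $\gamma_{k_L}^{(1+\alpha)/2}$.

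For the variance, independence across levels yields
\[
\operatorname{Var}\bigl[V_0^{\mathbf{n},\mathbf{k}}\bigr]=\frac{\operatorname{Var}[g_{\tau_{k_0}}(Z_{\tau_{k_0}})]}{n_0}+\sum_{l=1}^{L}\frac{\operatorname{Var}\bigl[g_{\tau_{k_l}}(Z_{\tau_{k_l}})-g_{\tau_{k_{l-1}}}(Z_{\tau_{k_{l-1}}})\bigr]}{n_l}.
\]
At level $l$ both stopping rules act on the \emph{same} testing path, so the correction vanishes whenever $\tau_{k_l}=\tau_{k_{l-1}}$, and since $|g_j|\le G$ we obtain $\operatorname{Var}[\cdot]\le 4G^{2}\,\mathrm{P}(\tau_{k_l}\ne\tau_{k_{l-1}})$. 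To control this disagreement probability, set $\Delta_{k,j}:=\sup_z|C_{k,j}(z)-C^*_j(z)|$; a disagreement at step $j$ traps $g_j(Z_j)$ between $C_{k_l,j}(Z_j)$ and $C_{k_{l-1},j}(Z_j)$ and hence forces $|g_j(Z_j)-C^*_j(Z_j)|\le\Delta_{k_l,j}+\Delta_{k_{l-1},j}$. Conditioning on the training $\sigma$-algebra $\mathcal{F}^{k_l}$, which by (AP) is independent of $Z$, and applying (AM) pointwise gives
\[
\mathrm{P}\bigl(\tau_{k_l}\ne\tau_{k_{l-1}}\bigm|\mathcal{F}^{k_l}\bigr)\le A\sum_{j=0}^{\mathcal{J}}\bigl(\Delta_{k_l,j}+\Delta_{k_{l-1},j}\bigr)^{\alpha}.
\]
Integrating the exponential tail in (AQ) yields $\mathrm{E}\bigl[\Delta_{k,j}^{\alpha}\bigr]\lesssim\gamma_{k}^{\alpha/2}$, and since $k_{l-1}<k_l$ and $\gamma_k$ is decreasing we conclude $\mathrm{P}(\tau_{k_l}\ne\tau_{k_{l-1}})\lesssim\gamma_{k_{l-1}}^{\alpha/2}$, which is the announced variance term. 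The cost follows from (AC): at level $l$, constructing both $C_{k_l,\cdot}$ and $C_{k_{l-1},\cdot}$ from a shared pool of $k_l$ training paths costs $\lesssim k_l^{\varkappa_1+1}$, while evaluating them on $n_l$ new paths costs $\lesssim n_l\,k_l^{\varkappa_2}$; summing over $l$ gives the stated bound.

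The delicate point is the coupling inequality: one must condition so that the training randomness encoded in the $\Delta_{k,j}$'s decouples cleanly from the fresh testing path $Z$ before (AM) is applied, and then pass from the tail form of (AQ) to an $\alpha$-th moment bound by integrating the exponential tail. Everything else is bookkeeping.
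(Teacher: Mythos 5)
Your proof is correct and reaches the same conclusion, but the variance bound is executed by a genuinely different (and somewhat cleaner) route than the paper's. The paper does not compare $\tau_{k_l}$ and $\tau_{k_{l-1}}$ directly: it inserts the optimal rule $\tau^*$ as an intermediary, bounds the second moment of the level-$l$ correction by the two terms $\mathrm{E}[g_{\tau^*}(Z_{\tau^*})-g_{\tau_{k}}(Z_{\tau_{k}})]^2$ for $k\in\{k_{l-1},k_l\}$, reduces each to $\sum_j \mathrm{P}(\mathcal{E}_{k,j})$ for the disagreement events $\mathcal{E}_{k,j}$ between the estimated and the \emph{optimal} exercise decisions, and then controls $\mathrm{P}(\mathcal{E}_{k,j})$ by the same dyadic peeling over the magnitude of $|g_j(Z_j)-C_j^*(Z_j)|$ already used in the proof of Theorem~\ref{thm:bias}. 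You instead bound $\operatorname{Var}[g_{\tau_{k_l}}(Z_{\tau_{k_l}})-g_{\tau_{k_{l-1}}}(Z_{\tau_{k_{l-1}}})]\le 4G^2\,\mathrm{P}(\tau_{k_l}\ne\tau_{k_{l-1}})$ directly, trap $g_j(Z_j)$ between the two estimated continuation values at the first disagreement time, condition on the training $\sigma$-algebra, apply (AM) with the random radius $\Delta_{k_l,j}+\Delta_{k_{l-1},j}$, and convert the exponential tail of (AQ) into the moment bound $\mathrm{E}[\Delta_{k,j}^{\alpha}]\lesssim\gamma_k^{\alpha/2}$. Both routes land on the same rate $\gamma_{k_{l-1}}^{\alpha/2}$ because $k_{l-1}<k_l$; your direct comparison is in principle the more natural coupling argument (it exploits that the two rules are close to each other rather than each being close to the optimal one), while the paper's detour through $\tau^*$ lets it recycle the $\mathcal{E}_{k,j}$ machinery verbatim. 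Two small points you should make explicit: (AM) is stated only for $\delta<\delta_0$, so applying it to the random radius requires the trivial extension $\mathrm{P}(|C_j^*(Z_j)-g_j(Z_j)|\le\delta)\le\max(A,\delta_0^{-\alpha})\,\delta^{\alpha}$ for all $\delta>0$; and for $\alpha>1$ you need $(a+b)^{\alpha}\le 2^{\alpha-1}(a^{\alpha}+b^{\alpha})$ before taking expectations. The bias (telescoping to the finest level, which the paper only asserts in the text preceding the theorem) and the cost accounting via (AC) are handled identically in both arguments.
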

Finally, the complexity of \ensuremath{V_{0}^{\mathbf{n},\mathbf{k}}} is given by the following theorem.
\begin{thm}
\label{compl_ml} Let assumptions (AP), (AC), (AQ) and (AM)  hold with
\[
\gamma_{k_l}=k_l^{-\mu},\quad k_l\in\mathbb{N}
\]
for some \ensuremath{\mu>0.}
 Then under the choice \ensuremath{k^*_{l}=k_{0}\cdot \theta^{l},}
 \ensuremath{l=0,1,\ldots,L,}
 with  \(\theta>1,\)
 %\ensuremath{\beta} and \ensuremath{L} specialized  in Table~\ref{compl}, 
$$L= \left\lceil\frac{2}{\mu(1+\alpha)}\log_\theta \left(\varepsilon^{-1}\cdot k_0^{-\mu(1+\alpha)/2}\right)\right\rceil$$
and
$$n^*_l=\varepsilon^{-2}\left(\sum\limits_{i=1}^L\sqrt{k_i^{(\varkappa_2-\mu\alpha/2)}}\right)\cdot\sqrt{k_l^{(-\varkappa_2-\mu\alpha/2)}}$$
the complexity of the estimate \eqref{mc}
is bounded, up to a constant, from above by 
\begin{equation}
\label{eq:mlmc_comp}
\mathcal{C}_{\mathbf{n}^*,\mathbf{k}^*}(\varepsilon)\lesssim
\begin{cases}
\varepsilon^{-2\cdot\max\left(\frac{\varkappa_1+1}{\mu(1+\alpha)},1\right)}, & 2\cdot\varkappa_2<\mu\alpha\\
\varepsilon^{-2\cdot\frac{\varkappa_1+1}{\mu(1+\alpha)}}, & 2\cdot\varkappa_2=\mu\alpha \text{ and } \frac{\varkappa_1+1}{\mu(1+\alpha)}>1\\
\varepsilon^{-2}\cdot\left(\log\varepsilon\right)^2, & 2\cdot\varkappa_2=\mu\alpha \text{ and } \frac{\varkappa_1+1}{\mu(1+\alpha)}\le1\\
\varepsilon^{-2\cdot\max\left(\frac{\varkappa_1+1}{\mu(1+\alpha)},1+\frac{\varkappa_2-\mu\alpha/2}{\mu(1+\alpha)}\right)}, & 2\cdot\varkappa_2>\mu\alpha\\
\end{cases}
\end{equation}
\end{thm}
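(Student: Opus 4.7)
The plan is to feed the bias, variance and cost bounds of Theorem~\ref{biasvarml} into the MSE decomposition
\[
\mathrm{E}\bigl[V_0^{\mathbf{n},\mathbf{k}}-V_0^*\bigr]^2=\bigl(\mathrm{E}[V_0^{\mathbf{n},\mathbf{k}}]-V_0^*\bigr)^2+\operatorname{Var}[V_0^{\mathbf{n},\mathbf{k}}],
\]
impose that both terms be of order $\varepsilon^2$, and then minimise the cost bound $\sum_{l=0}^L(k_l^{\varkappa_1+1}+n_l k_l^{\varkappa_2})$ over the remaining free parameters.

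First I would handle the bias. It depends only on the finest level: by Theorem~\ref{biasvarml} the squared bias is of order $\gamma_{k_L}^{1+\alpha}=k_L^{-\mu(1+\alpha)}$, so the requirement $k_L^{-\mu(1+\alpha)}\lesssim\varepsilon^2$ combined with $k_l=k_0\theta^l$ forces exactly the stated choice of $L$. Once the grid $(k_l)$ is fixed, the training cost $\sum_l k_l^{\varkappa_1+1}$ is a geometric sum dominated by $l=L$ and equals $k_L^{\varkappa_1+1}\sim\varepsilon^{-2(\varkappa_1+1)/(\mu(1+\alpha))}$, independently of the $n_l$; this will eventually be one side of the max in \eqref{eq:mlmc_comp}.

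The remaining task is to minimise $\sum_{l=0}^L n_l k_l^{\varkappa_2}$ subject to the variance constraint $C/n_0+\sum_{l=1}^L k_{l-1}^{-\mu\alpha/2}/n_l\lesssim\varepsilon^2$. A single Lagrange multiplier (equivalently, Cauchy--Schwarz applied to $\sum_l \sqrt{v_l/n_l}\cdot\sqrt{n_l k_l^{\varkappa_2}}$ with $v_l=k_l^{-\mu\alpha/2}$) produces precisely the formula for $n_l^*$ given in the statement and a sampling cost of the form $\varepsilon^{-2}S_L^{2}$, where $S_L:=\sum_{l=0}^L\sqrt{k_l^{\varkappa_2-\mu\alpha/2}}$. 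One then verifies by direct substitution that this $n_l^*$ realises $\operatorname{Var}[V_0^{\mathbf{n},\mathbf{k}}]\lesssim\varepsilon^2$; this is the one calculation I would carry out in full.

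Finally I would evaluate $S_L$ in the three regimes of $\varkappa_2-\mu\alpha/2$: for $2\varkappa_2<\mu\alpha$ the geometric ratio $\theta^{(\varkappa_2-\mu\alpha/2)/2}$ is $<1$ so $S_L=O(1)$, giving sampling cost $\varepsilon^{-2}$; for $2\varkappa_2=\mu\alpha$ every summand equals one and $S_L\sim L\sim|\log\varepsilon|$, giving $\varepsilon^{-2}(\log\varepsilon)^2$; for $2\varkappa_2>\mu\alpha$ the sum is dominated by $l=L$, so using $k_L\sim\varepsilon^{-2/(\mu(1+\alpha))}$ one obtains $S_L^2\sim k_L^{\varkappa_2-\mu\alpha/2}\sim\varepsilon^{-2(\varkappa_2-\mu\alpha/2)/(\mu(1+\alpha))}$. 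Taking the maximum of the training exponent $(\varkappa_1+1)/(\mu(1+\alpha))$ and the sampling exponent in each regime reproduces the four lines of \eqref{eq:mlmc_comp}. The main obstacle is the Cauchy--Schwarz optimisation for $n_l^*$ together with the careful bookkeeping in the borderline case $2\varkappa_2=\mu\alpha$, where the logarithmic factor must be compared with the training exponent to decide whether it survives (third line) or is absorbed by the training cost (second line).
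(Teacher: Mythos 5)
Your proposal follows essentially the same route as the paper: the bias constraint fixes $L$, the Lagrange-multiplier (equivalently Cauchy--Schwarz) optimisation of the sampling cost under the variance constraint from Theorem~\ref{biasvarml} yields the stated $n_l^*$ with sampling cost $\varepsilon^{-2}\bigl(\sum_l\sqrt{k_l^{\varkappa_2-\mu\alpha/2}}\bigr)^2$, and the same three-regime evaluation of that geometric sum, combined with the training cost $k_L^{\varkappa_1+1}$, gives the four lines of \eqref{eq:mlmc_comp}. No gaps; this matches the paper's argument.
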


\paragraph{Discussion} Let us compare the complexities of the estimates \( V^{n^*,k^*}_0 \) and \(  V^{\mathbf{n}^*,\mathbf{k}^*}_0.  \)
For the sake of clarity we will assume that $\varkappa_1=\varkappa_2=\varkappa$ as in the mesh or local regression methods. Then \eqref{eq:mlmc_comp} versus \eqref{eq:comp_smc} can be written as 
\begin{equation*}
\begin{cases}
\varepsilon^{-2\cdot\max\left(\frac{\varkappa+1}{\mu(1+\alpha)},1\right)}, & 2\cdot\varkappa<\mu\alpha\\
\varepsilon^{-2\cdot\frac{\varkappa+1}{\mu(1+\alpha)}}, & 2\cdot\varkappa=\mu\alpha \text{ and } \frac{\varkappa+1}{\mu(1+\alpha)}>1\\
\varepsilon^{-2}\cdot\left(\log\varepsilon\right)^2, & 2\cdot\varkappa=\mu\alpha \text{ and } \frac{\varkappa+1}{\mu(1+\alpha)}\le1\\
\varepsilon^{-2\cdot\max\left(\frac{\varkappa+1}{\mu(1+\alpha)},1+\frac{\varkappa-\mu\alpha/2}{\mu(1+\alpha)}\right)}, & 2\cdot\varkappa>\mu\alpha\\
\end{cases} 
\ \vee \ 
\varepsilon^{-2\cdot\max\left(\frac{\varkappa+1}{\mu(1+\alpha)},\, 1+\frac{\varkappa}{\mu(1+\alpha)}\right)}
\end{equation*}
Now it is clear that multilevel algorithm will not be superior to the standard Monte Carlo algorithm in the case $\mu(1+\alpha)\le 1$. In the case $\mu(1+\alpha)>1$, the computational gain, up to a logarithmic factor, is given by
$$\begin{cases}
\varepsilon^{-2\cdot\min\left(\frac{\varkappa}{\mu(1+\alpha)},1-\frac{1}{\mu(1+\alpha)}\right)}, & 2\cdot\varkappa<\mu\alpha\\
\varepsilon^{-2\cdot\left(1-\frac{1}{\mu(1+\alpha)}\right)}, & 2\cdot\varkappa=\mu\alpha \text{ and } \frac{\varkappa+1}{\mu(1+\alpha)}>1\\
\varepsilon^{-2\cdot\frac{\varkappa}{\mu(1+\alpha)}}, & 2\cdot\varkappa=\mu\alpha \text{ and } \frac{\varkappa+1}{\mu(1+\alpha)}\le1\\
\varepsilon^{-2\cdot\min\left(1-\frac{1}{\mu(1+\alpha)},\frac{\mu\alpha/2}{\mu(1+\alpha)}\right)}, & 2\cdot\varkappa>\mu\alpha\\
\end{cases} $$
Taking into account the fact that \( \alpha=1 \) in the usual situation, we conclude that it is advantageous to use MLMC as long as \( \mu>1/2. \) 
\section{Numerical example: Bermudan max calls on multiple assets}
\label{num_example}
Suppose that the price of the underlying asset \( X=(X^1,\ldots,X^d) \) follows  a Geometric Brownian motion (GBM) under the risk-neutral measure, i.e.,
\begin{eqnarray}
\label{sde}
dX^i_t=(r-\delta)X^i_t  dt+\sigma X^i_t dB^i_t,
\end{eqnarray}
where \( r \) is the risk-free interest
rate, \( \delta \) the dividend rate, \( \sigma \) the volatility, and \( B_t=(B^1_t,\ldots,B^d_t) \) is a vector of \( d \) independent standard Brownian motions. At any time $t\in\{t_{0}%
,...,t_{\mathcal{J}}\}$ the holder of the option may exercise it and receive
the payoff
\[
h(X_{t})=e^{-rt}(\max(X_{t}^{1},...,X_{t}^{d})-\kappa)^{+}.
\]
We consider a benchmark example (see, e.g. \citet{BG4}, p. 462) when  \( d=5, \) \(\sigma=0.2,\) $r=0.05$,  $\delta=0.1$,  $\kappa=100$
$t_{j}=jT/\mathcal{J},\,j=0,\ldots,\mathcal{J}$, with $T=3$ and
$\mathcal{J}=3$. 
\subsection{Mesh method}
 First note that for the mesh method the conditions of Theorem~\ref{compl_mc} and Theorem~\ref{compl_ml}  are fulfilled with \(\gamma_k=1/k\) in (AQ) and \(\kappa_1=\kappa_2=1\) in (AC). Moreover, for the problem at hand, the assumption (AB) holds with \(\alpha=1\).  Consider  the standard MC mesh approach. For any \(\varepsilon>0\) we set
\begin{eqnarray*}
k = (\varepsilon/2.4)^{-1}, \quad n=(\varepsilon/2.4)^{-2}
\end{eqnarray*}
and simulate independently \( k \) training paths of the process \( Z \)  using the exact formula 
\[  Z_{j}^{(i)}=Z_{{j-1}}^{(i)} \exp \left( \left[r-\delta-\frac{1}{2} \sigma^2\right](t_j-t_{j-1}) +
\sigma\sqrt {(t_j-t_{j-1})} \cdot \xi_j^i \right),   
\]
where \( \xi^i_j, \) \( i=1,\ldots, k, \) are i. i. d. standard normal random variables.
The conditional density of \(Z_j\) given \(Z_{j-1}\) is given by
\[  
p_j(x,y)=\prod_{i=1}^d p_j(x_i,y_i), \quad x=(x_1,\ldots,x_d), \quad y=(y_1,\ldots, y_d), 
\]
where
\begin{eqnarray*}  
p_j(x_i,y_i)&=&\frac{x_i}{y_i\sigma\sqrt{2\pi (t_j-t_{j-1})}}\times
\\
	&& \times\exp\left(\frac{
		-\left(\log\left(\frac{y_i}{x_i}\right)-\left(r-\delta-\frac{1}{2}
		\sigma^2\right)(t_j-t_{j-1})\right)^2}{
		2\sigma^2 (t_j-t_{j-1})}\right). 
\end{eqnarray*}
Using the above paths we construct the sequence of the estimates (training phase)
\[
C_{k,0}(x),\ldots,C_{k,\mathcal{J}}(x)
\] 
as described in Example~\ref{mesh_ex} and then  in testing phase compute the estimate \(V_{0}^{n,k}\) via \eqref{mc}.
Note that for the variance reduction we use inner and outer control variates based on the analytical formula for the European max-call option (see \citet{BG4}) 
\[
\mathcal{E}(x,t,T):=E\left[e^{-rT} \max_{k=1,\ldots,d}\left(X_T^k-\kappa\right)^+\Big| X_t=x\right].
\]
Finally we approximate the mean square error (MSE) of the estimate \(V_{0}^{n,k}\) based on \(100\) repetitions of the training and testing phases. The plot of the estimated quotient \(\sqrt{\text{MSE}}/\varepsilon\) is shown on the l.h.s. of Figure~\ref{fig:mesh}. Turn now to the ML approach. 
Here we take \(\mathbf{n}=(n_0, n_1,\ldots,n_l)\) and \(\mathbf{k}=(k_0, k_1,\ldots,k_l)\) with \(k_0=5\)
\[
k_l=k_0\cdot 2^l, \quad 	n_l=\frac{1}{(\varepsilon/8)^2} \left(\sum_{i=1}^L\sqrt{k_i^{1/2}}\right ) \sqrt{k_l^{-3/2}}, \quad l=0,\ldots,L,
\]
and 
\[
	L=\left\lceil\log_\theta\left(8 \cdot k_0/\varepsilon\right)  \right\rceil.
\] 
The grid for \(\varepsilon\) on the r.h.s. of Figure~\ref{fig:mesh} is chosen in such a way that \(L(\varepsilon)\) runs through the set \(\{1,2,\ldots,7\}.\)
The plot of the estimated quotient \(\sqrt{\text{MSE}}/\varepsilon\) is shown on the r.h.s. of Figure~\ref{fig:mesh}.
Figure~\ref{fig:mesh} suggests that the rates given  in Theorem~\ref{thm:bias} and Theorem~\ref{compl_ml} do hold.
Next we compare the computational cost  
\begin{eqnarray*}
\sum\limits_{l=0}^L (k_l^2 + n_l\cdot k_l)
\end{eqnarray*}
where \(k_l,\)  \(n_l\) and \(L\) are defined above to the theoretical complexity given by \(\varepsilon^{-2.5}.\)
In Figure~\ref{fig:comp} we present the corresponding log-plots of complexities and the gains as functions of \(\log (1/\varepsilon).\) 
\begin{center}
\begin{figure}
\label{fig:mesh}
\includegraphics[width=\textwidth]{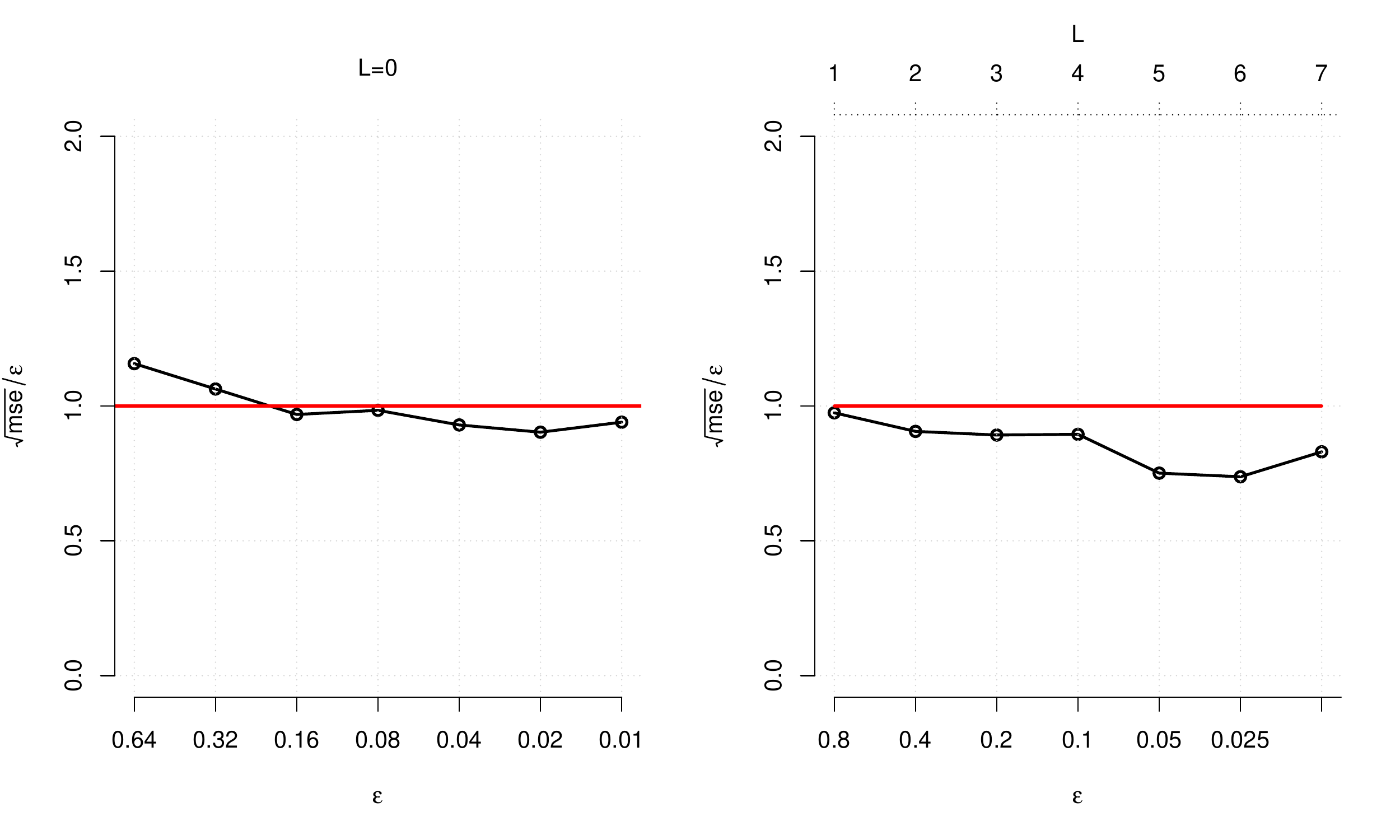}
\caption{Mesh method: mean square errors of the standard Monte Carlo estimate \(V_{0}^{n,k}\) (left) and the multilevel estimate \(V_{0}^{\mathbf{n},\mathbf{k}}\) (right) in the units of the expected error \(\varepsilon.\)}  
\end{figure}
\end{center}
\begin{center}
\begin{figure}
\includegraphics[width=\textwidth]{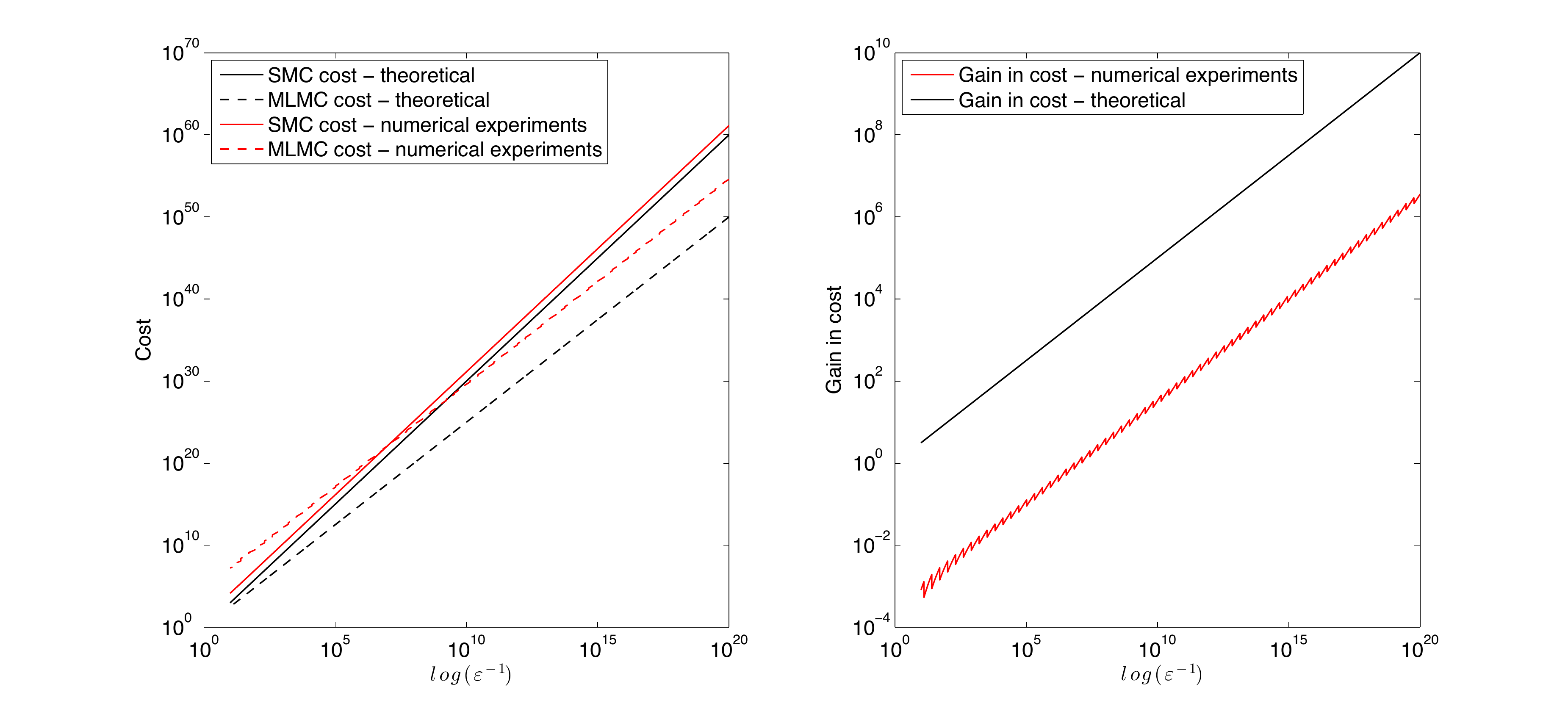}
\caption{Comparison of theoretical and numerical costs  (left) and theoretical and numerical gains (right) as functions  of \(\log(\varepsilon^{-1})\) \label{fig:comp}
}  
\end{figure}
\end{center}

\subsection{Local regression}
We use the local constant regression with the simplest kernels of the form:
\begin{eqnarray*}
K(z)=1(|z|\leq 1)
\end{eqnarray*}
and define
\begin{eqnarray}
\label{local_est}
C_{k,j}(z)=\sum_{i=1}^{k}\zeta_{k,j+1}(Z_{j+1}^{(i)})\cdot w^k_{ij}(z),
\end{eqnarray}
where 
\begin{eqnarray*}
w^k_{ij}(z)=\frac{1\bigl(|z-Z_{j}^{(i)}|\leq \delta_k\bigr)}{\sum_{l=1}^{k} 1\bigl(|z-Z_{j}^{(l)}|\leq \delta_k\bigr)}, \quad i=1,\ldots, k,
\end{eqnarray*}
with \(\delta_k=100\cdot k^{-1/(d+2)}\) (see Example~\ref{mesh_ex}).
For any \(\varepsilon>0\) we set
\begin{eqnarray*}
k = (\varepsilon/1.2)^{-6}, \quad n=(\varepsilon/1.2)^{-2}
\end{eqnarray*}
corresponding to the choice \(\gamma_k=k^{-1/6}\) (\(\mu=1/6\)) in Theorem~\ref{compl_mc} and approximate the mean square error (MSE) of the MC estimate \(V_{0}^{n,k}\) based on \(100\) repetitions of the training and testing phases.
In the case of the MLMC algorithm we take $k_0=100,$ 
\[
	L=\left\lceil6\cdot\log_\theta\left( \frac{3}{\varepsilon \cdot k_0^{1/6}}\right)  \right\rceil,
\]
and
\[
k_l=k_0 2^l, \quad n_l=\frac{10}{(\varepsilon/3)^2} \left(\sum_{i=1}^L\sqrt{(k_i)^{11/12}}\right) \sqrt{(k_l)^{-13/12}},\quad l=0,\ldots,L.
\]
The results in form of the quotients \(\sqrt{MSE}/\varepsilon\) are shown in Figure~\ref{fig:local}.
\begin{center}
\begin{figure}
\label{fig:local}
  \includegraphics[width=\textwidth]{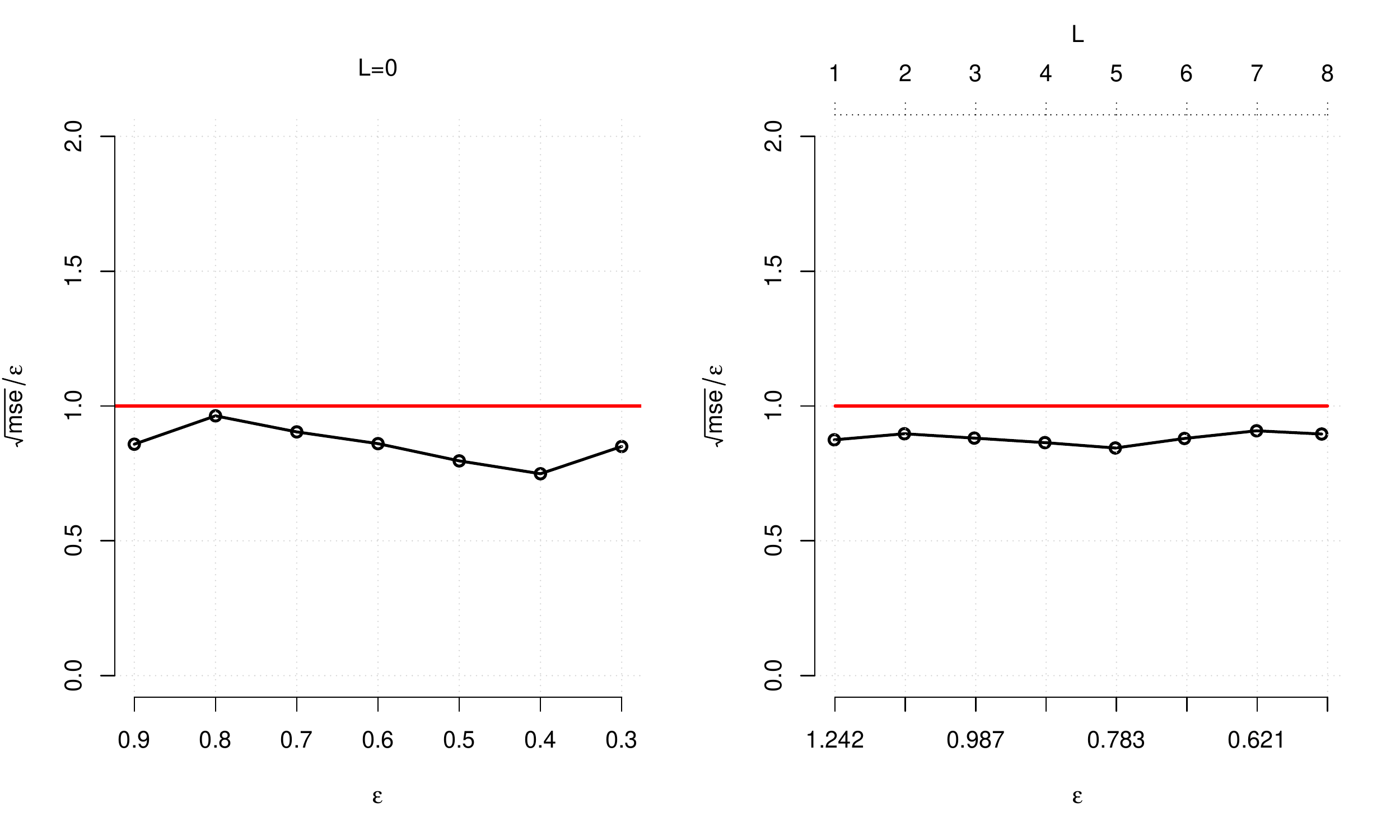}
\caption{Local regression: mean square errors of the standard Monte Carlo estimate \(V_{0}^{n,k}\) (left) and the multilevel estimate \(V_{0}^{\mathbf{n},\mathbf{k}}\) (right) in the units of the expected error \(\varepsilon.\)}    
\end{figure}
\end{center}
\section{Proofs}
\label{proofs}
\subsection{Proof of Theorem \ref{thm:bias}}

A family of stopping times $\left(\tau_{j}\right)_{j=0,\ldots,\mathcal{J}}$
w.r.t. the filtration $(\mathcal{F}_{j})_{j=0,\ldots,\mathcal{J}}$
is called consistent if 
\[
j\leq\tau_{j}\leq\mathcal{J},\quad\tau_{\mathcal{J}}=\mathcal{J}
\]
and
\[
\tau_{j}>j\quad\Longrightarrow\quad\tau_{j}=\tau_{j+1}.
\]

\begin{lem}
\label{lem:haupt_techn}Let $(Y_{j})_{j=0,\ldots,\mathcal{J}}$ be
a process adapted to the filtration $(\mathcal{F}_{j})_{j=0,\ldots,\mathcal{J}}$
and let $\left(\tau_{j}^{1}\right)$ and $\left(\tau_{j}^{2}\right)$
be two consistent families of stopping times. Then 
\[
\mathrm{E}^{\mathcal{F}_{j}}\left[Y_{\tau_{j}^{1}}-Y_{\tau_{j}^{2}}\right]=\mathrm{E}^{\mathcal{F}_{j}}\left\{ \sum_{l=j}^{\mathcal{J}-1}\left(Y_{l}-\mathrm{E}^{\mathcal{F}_{l}}\left[Y_{\tau_{l+1}^{1}}\right]\right)\left(1_{\{\tau_{l}^{1}=l,\tau_{l}^{2}>l\}}-1_{\{\tau_{l}^{1}>l,\tau_{l}^{2}=l\}}\right)1_{\{\tau_{l}^{2}>l\}}\right\} 
\]
for any $j=0,\ldots,\mathcal{J}-1.$ 
\end{lem}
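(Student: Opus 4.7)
The plan is to prove the identity by \textbf{backward induction on $j$}, starting from the trivial base $j=\mathcal{J}$ (both sides vanish since $\tau_{\mathcal{J}}^1=\tau_{\mathcal{J}}^2=\mathcal{J}$ and the sum is empty) and descending. The inductive engine is a one-step decomposition that isolates the $l$-th summand on the right and reduces the residual to the next time level.

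To derive this one-step identity, I would use the consistency property to split each path according to whether $\tau_l^i$ equals $l$ or exceeds $l$: since $\tau_l^i>l$ forces $\tau_l^i=\tau_{l+1}^i$, one has $Y_{\tau_l^i}=Y_l\,1_{\{\tau_l^i=l\}}+Y_{\tau_{l+1}^i}\,1_{\{\tau_l^i>l\}}$ for $i=1,2$. Subtracting these two representations and rearranging (adding and subtracting $Y_{\tau_{l+1}^1}\,1_{\{\tau_l^2>l\}}$ to bundle the ``both continue'' case together with the ``$\tau^1$ stops, $\tau^2$ continues'' case into a clean residual) yields
\[
Y_{\tau_l^1}-Y_{\tau_l^2}=(Y_l-Y_{\tau_{l+1}^1})\bigl(1_{\{\tau_l^1=l,\tau_l^2>l\}}-1_{\{\tau_l^1>l,\tau_l^2=l\}}\bigr)+(Y_{\tau_{l+1}^1}-Y_{\tau_{l+1}^2})\,1_{\{\tau_l^2>l\}}.
\]
Taking $\mathrm{E}^{\mathcal{F}_l}$ on both sides and using that $Y_l$, the two event indicators of the inner bracket, and $1_{\{\tau_l^2>l\}}$ are all $\mathcal{F}_l$-measurable (by adaptedness of $Y$ and the stopping-time property of $\tau_l^1,\tau_l^2$), only $Y_{\tau_{l+1}^1}$ in the first summand and $D_{l+1}:=Y_{\tau_{l+1}^1}-Y_{\tau_{l+1}^2}$ in the second remain under the conditional expectation. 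This produces exactly the $l$-th summand of the target sum plus a residual $\mathrm{E}^{\mathcal{F}_l}[D_{l+1}]\cdot 1_{\{\tau_l^2>l\}}$.

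The inductive step then applies this decomposition at each level $l=j,j+1,\ldots,\mathcal{J}-1$ to the residual, collapsing the nested conditional expectations via the tower property $\mathrm{E}^{\mathcal{F}_j}=\mathrm{E}^{\mathcal{F}_j}\mathrm{E}^{\mathcal{F}_l}$. The main obstacle is the bookkeeping of the accumulated indicators $1_{\{\tau_m^2>m\}}$ generated at successive iterations: at depth $l$ the unrolled recursion naturally carries a product of the form $\prod_{m=j}^{l-1}1_{\{\tau_m^2>m\}}$. Here one invokes consistency of the family $(\tau_m^2)$, which forces $\tau_m^2=\tau_{m+1}^2$ on $\{\tau_m^2>m\}$ and thereby allows the chain of indicators together with the $\mathcal{F}_l$-measurability of the integrand to reduce to the single factor $1_{\{\tau_l^2>l\}}$ appearing in the statement. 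Once this telescoping collapse is verified, the backward induction closes and the identity follows; every remaining step is purely algebraic.
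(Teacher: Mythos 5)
Your one-step identity is correct and is, after regrouping, exactly the four-term decomposition the paper itself uses; the resulting recursion $\Delta_l=T_l+\mathrm{E}^{\mathcal{F}_l}\bigl[\Delta_{l+1}\bigr]1_{\{\tau_l^2>l\}}$ with $\Delta_{\mathcal{J}}=0$, followed by backward unrolling via the tower property, is precisely the paper's argument. So the route is the same, and everything up to the last step is fine.

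The step that does not close is the ``telescoping collapse'' you invoke at the end. Unrolling the recursion puts, in front of the level-$l$ summand, the factor $\prod_{m=j}^{l-1}1_{\{\tau_m^2>m\}}$, and consistency of $(\tau_m^2)$ identifies this product with $1_{\{\tau_j^2\ge l\}}$ (on the product event one has $\tau_j^2=\tau_{j+1}^2=\cdots=\tau_l^2>l-1$), \emph{not} with $1_{\{\tau_l^2>l\}}$. These genuinely differ: if $\tau_j^2=j$ the product vanishes for every $l>j$ while $\tau_l^2>l$ may still hold; conversely, multiplying the bracket by $1_{\{\tau_l^2>l\}}$ annihilates its $1_{\{\tau_l^1>l,\,\tau_l^2=l\}}$ part, which must survive. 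A minimal check: take $\mathcal{J}=1$, $j=0$, $\tau_0^1\equiv1$, $\tau_0^2\equiv0$; the left side is $\mathrm{E}^{\mathcal{F}_0}[Y_1]-Y_0$, while the right side as stated is $0$ because $1_{\{\tau_0^2>0\}}=0$. So the collapse you appeal to is false, and consistency cannot rescue it --- the identity as printed needs $1_{\{\tau_j^2\ge l\}}$ (equivalently the product $\prod_{m=j}^{l-1}1_{\{\tau_m^2>m\}}$) in place of $1_{\{\tau_l^2>l\}}$, and with that replacement your induction closes immediately. In fairness, the paper's own proof makes the identical silent leap: it derives the correct recursion and then simply writes down the stated sum. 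The discrepancy is harmless downstream, since the proof of Theorem~\ref{thm:bias} only uses the bound obtained by replacing all trailing indicators by $1$.
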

\begin{proof}
We have
\[
\]
\begin{eqnarray*}
Y_{\tau_{j}^{1}}-Y_{\tau_{j}^{2}} & = & \left[Y_{j}-Y_{\tau_{j}^{2}}\right]1_{\{\tau_{j}^{1}=j,\tau_{j}^{2}>j\}}+\left[Y_{\tau_{j}^{1}}-Y_{j}\right]1_{\{\tau_{j}^{1}>j,\tau_{j}^{2}=j\}}\\
 &  & +\left[Y_{\tau_{j}^{1}}-Y_{\tau_{j}^{2}}\right]1_{\{\tau_{j}^{1}>j,\tau_{j}^{2}>j\}}\\
 & = & \left[Y_{j}-Y_{\tau_{j+1}^{1}}\right]1_{\{\tau_{j}^{1}=j,\tau_{j}^{2}>j\}}+\left[Y_{\tau_{j+1}^{1}}-Y_{j}\right]1_{\{\tau_{j}^{1}>j,\tau_{j}^{2}=j\}}\\
 &  & +\left[Y_{\tau_{j+1}^{1}}-Y_{\tau_{j+1}^{2}}\right]1_{\{\tau_{j}^{1}=j,\tau_{j}^{2}>j\}}+\left[Y_{\tau_{j+1}^{1}}-Y_{\tau_{j+1}^{2}}\right]1_{\{\tau_{j}^{1}>j,\tau_{j}^{2}>j\}}.
\end{eqnarray*}
Therefore it holds for $\Delta_{j}=\mathrm{E}^{\mathcal{F}_{j}}\left[Y_{\tau_{j}^{1}}-Y_{\tau_{j}^{2}}\right]$
\[
\Delta_{j}=\left[Y_{j}-\mathrm{E}^{\mathcal{F}_{j}}\left[Y_{\tau_{j+1}^{1}}\right]\right]\left(1_{\{\tau_{j}^{1}=j,\tau_{j}^{2}>j\}}-1_{\{\tau_{j}^{1}>j,\tau_{j}^{2}=j\}}\right)+\mathrm{E}^{\mathcal{F}_{j}}\left\{ \Delta_{j+1}1_{\{\tau_{j}^{2}>j\}}\right\} 
\]
with $\Delta_{\mathcal{J}}=0$ and 
\[
\Delta_{j}=\mathrm{E}^{\mathcal{F}_{j}}\left\{ \sum_{l=j}^{\mathcal{J}-1}\left(Y_{l}-\mathrm{E}^{\mathcal{F}_{l}}\left[Y_{\tau_{l+1}^{1}}\right]\right)\left(1_{\{\tau_{l}^{1}=l,\tau_{l}^{2}>l\}}-1_{\{\tau_{l}^{1}>l,\tau_{l}^{2}=l\}}\right)1_{\{\tau_{l}^{2}>l\}}\right\} .
\]
\end{proof}
Taking into account that 
\[
C_{l}^{*}(Z_{l})=\mathrm{E}^{\mathcal{F}_{l}}\left[g_{\tau_{l+1}^{*}}(Z_{\tau_{l+1}^{*}})\right]\leq g_{l}(Z_{l})
\]
on $\{\tau_{l}^{*}=l\}$ and 
\[
C_{l}^{*}(Z_{l})< g_{l}(Z_{l})
\]
on $\{\tau_{l}^{*}>l\},$ we get from Lemma \ref{lem:haupt_techn} for \(R=V_0^{n,k}-V_0^*\)
\begin{eqnarray*}
|R| & = & \left|\mathrm{E}\left[g_{\tau_{k}^{*}}(Z_{\tau_{k}^{*}})-g_{\tau_{k}}(Z_{\tau_{k}})\right]\right|\\
 & \leq & \mathrm{E}\left[\sum_{l=0}^{\mathcal{J}-1}\left|C_{l}^{*}(Z_{l})-g_{l}(Z_{l})\right|\left(1_{\{\tau^*_{k,l}=l,\tau_{k,l}>l\}}+1_{\{\tau^*_{k,l}>l,\tau_{k,l}=l\}}\right)\right].
\end{eqnarray*}
Introduce 
\begin{align*}
\mathcal{E}_{k,j} & =\{g_{j}(Z_{j})>C_{k,j}^{*}(Z_{j}),\: g_{j}(Z_{j})\leq C_{k,j}(Z_{j})\}\\
 & \cup\{g_{j}(Z_{j})\leq C_{k}^{*}(Z_{j}),\: g_{j}(Z_{j})>C_{k,j}(Z_{j})\},
\end{align*}
\[
\mathcal{A}_{k,j,0}=\left\{ 0<\left|g_{j}(Z_{j})-C_{j}^{*}(Z_{j})\right|\leq\gamma_{k}^{-1/2}\right\} ,
\]
\[
\mathcal{A}_{k,j,i}=\left\{ 2^{i-1}\gamma_{k}^{-1/2}<\left|g_{j}(Z_{j})-C_{j}^{*}(Z_{j})\right|\leq2^{i}\gamma_{k}^{-1/2}\right\} 
\]
for $j=0,\ldots,\mathcal{J}-1$ and $i>0.$ It holds
\begin{eqnarray*}
|R| & \leq & \mathrm{E}\left[\sum_{l=0}^{\mathcal{J}-1}\left|C_{l}^{*}(Z_{l})-g_{l}(Z_{l})\right|1_{\{\mathcal{E}_{k,l}\}}\right]\\
 & = & \mathrm{E}\left[\sum_{i=0}^{\infty}\sum_{l=0}^{\mathcal{J}-1}\left|C_{l}^{*}(Z_{l})-g_{l}(Z_{l})\right|1_{\{\mathcal{E}_{k,l}\cap\mathcal{A}_{k,l,i}\}}\right]\\
 & =\gamma_{k}^{-1/2} & \sum_{l=0}^{\mathcal{J}-1}\mathrm{P}\left(\left|g_{l}(Z_{l})-C_{l}^{*}(Z_{l})\right|\leq\gamma_{k}^{-1/2}\right)\\
 &  & +\mathrm{E}\left[\sum_{i=1}^{\infty}\sum_{l=0}^{\mathcal{J}-1}\left|C_{l}^{*}(Z_{l})-g_{l}(Z_{l})\right|1_{\{\mathcal{E}_{k,l}\cap\mathcal{A}_{k,l,i}\}}\right].
 \end{eqnarray*}
Using the fact that $\left|g_{l}(Z_{l})-C_{l}^{*}(Z_{l})\right|\leq\left|C_{l}(Z_{l})-C_{l}^{*}(Z_{l})\right|$
on $\mathcal{E}_{k,l},$ we derive 
\begin{eqnarray*}
|R| & \leq & \gamma_{k}^{-1/2}\sum_{l=0}^{\mathcal{J}-1}\mathrm{P}\left(\left|g_{l}(Z_{l})-C_{l}^{*}(Z_{l})\right|\leq\gamma_{k}^{-1/2}\right)\\
 &  & +\sum_{i=1}^{\infty}2^{i}\gamma_{k}^{-1/2}\mathrm{E}\left[\sum_{l=0}^{\mathcal{J}-1}1_{\left\{ \left|g_{j}(Z_{j})-C_{j}^{*}(Z_{j})\right|\leq2^{i}\gamma_{k}^{-1/2}\right\} }\mathrm{P}^{k}\left(\left|C_{k,l}(Z_{l})-C_{l}^{*}(Z_{l})\right|>2^{i-1}\gamma_{k}^{-1/2}\right)\right]\\
 & \leq & A\mathcal{J}\gamma_{k}^{-\alpha/2}+A\mathcal{J}\gamma_{k}^{-\alpha/2}\sum_{i=1}^{\infty}2^{i}B_{1}\exp(-B_{2}2^{i-1}).
\end{eqnarray*}
\subsection{Proof of Theorem \ref{compl_mc}}
Based on \eqref{eq:var_smc} we have the optimization problem
\begin{gather*}
k^{\varkappa_1+1} +n\cdot k^{\varkappa_2}\to\min\\
\gamma_{k}^{(1+\alpha)/2}\le\varepsilon\\
n\ge\varepsilon^{-2}
\end{gather*}
It is clear that
$$\gamma_{k}^{(1+\alpha)/2}=k^{-\mu(1+\alpha)/2}\Rightarrow k\ge \varepsilon^{-\frac{2}{\mu(1+\alpha)}},$$
which immediately leads to the  statement.
\subsection{Proof of Theorem \ref{biasvarml}}
The formula for the variance follows from the estimate
\begin{eqnarray*}
\mathrm{E}\left[g_{\tau_{k_{l}}}\Bigl(Z_{\tau_{k_{l}}}\Bigr)-g_{\tau_{k_{l-1}}}\Bigl(Z_{\tau_{k_{l-1}}}\Bigr)\right]^2 
&\leq &
\mathrm{E}\left[g_{\tau^*}\Bigl(Z_{\tau^*}\Bigr)-g_{\tau_{k_{l-1}}}\Bigl(Z_{\tau_{k_{l-1}}}\Bigr)\right]^2
\\
&&+\mathrm{E}\left[g_{\tau_{k_{l}}}\Bigl(Z_{\tau_{k_{l}}}\Bigr)-g_{\tau^*}\Bigl(Z_{\tau^*}\Bigr)\right]^2
\\
&\leq & 2^{\mathcal{J}}G^2\sum_{l=0}^{\mathcal{J}-1}\Bigl[\mathrm{P}(\mathcal{E}_{k_{l-1},l})+\mathrm{P}(\mathcal{E}_{k_{l},l})\Bigr],
\end{eqnarray*}
where for any \(k\)
\begin{eqnarray*}
\sum_{l=0}^{\mathcal{J}-1}\mathrm{P}(\mathcal{E}_{k,l})& \leq & \sum_{l=0}^{\mathcal{J}-1}\mathrm{P}\left(\left|g_{l}(Z_{l})-C_{l}^{*}(Z_{l})\right|\leq\gamma_{k}^{-1/2}\right)\\
 &  & +\sum_{i=1}^{\infty}2^{i}\mathrm{E}\left[\sum_{l=0}^{\mathcal{J}-1}1_{\left\{ \left|g_{j}(Z_{j})-C_{j}^{*}(Z_{j})\right|\leq2^{i}\gamma_{k}^{-1/2}\right\} }\mathrm{P}^{k}\left(\left|C_{k,l}(Z_{l})-C_{l}^{*}(Z_{l})\right|>2^{i-1}\gamma_{k}^{-1/2}\right)\right]\\
 & \leq & A\mathcal{J}\gamma_{k}^{-\alpha/2}+A\mathcal{J}\gamma_{k}^{-\alpha/2}\sum_{i=1}^{\infty}2^{i}B_{1}\exp(-B_{2}2^{i-1}).
\end{eqnarray*}

\subsection{Proof of Theorem \ref{compl_ml}}
Due to the monotone structure of the functional, we can consider the following optimization problem:

\begin{gather}
\label{eq:mlmc1}
\sum\limits_{l=0}^L k_l^{\varkappa_1+1} + n_l\cdot k_l^{\varkappa_2}\to\min\\
\label{eq:mlmc2}
\gamma_{k_L}^{(1+\alpha)/2}=k_L^{-\mu(1+\alpha)/2}=\left(k_0\cdot\theta^{L}\right)^{-\mu(1+\alpha)/2}\le\varepsilon\\
\label{eq:mlmc3}
\frac{1}{n_0}     +    \sum\limits_{l=1}^L \frac{\gamma^{\alpha/2}_{k_{l-1}}}{n_l}\asymp
k_0^{-\mu\alpha/2}\cdot\sum\limits_{l=0}^L \frac{\theta^{-l\mu\alpha/2}}{n_l}=\varepsilon^{2}
\end{gather}
Now the Lagrange multiplier method with respect to $n_l$ gives us
$$ k_l^{\varkappa_2}=-\lambda\frac{k^{-\mu\alpha/2}_l}{n_l^2}\Rightarrow n_l = \sqrt{(-\lambda)\cdot k_l^{(-\varkappa_2-\mu\alpha/2)}}.$$
Now one can put the value of $n_l$ in \eqref{eq:mlmc3}:
$$\sum\limits_{l=1}^L \frac{\gamma^{\alpha/2}_{k_{l-1}}}{n_l} \asymp \sum\limits_{l=1}^L \frac{k_l^{-\mu\alpha/2}}{\sqrt{(-\lambda)\cdot k_l^{(-\varkappa_2-\mu\alpha/2)}}}\asymp\varepsilon^{2}$$
$$\Downarrow$$
$$\sqrt{(-\lambda)}=\varepsilon^{-2}\cdot \sum\limits_{l=1}^L\sqrt{k_l^{(\varkappa_2-\mu\alpha/2)}}$$
$$\Downarrow$$
$$n_l=\varepsilon^{-2}\left(\sum\limits_{i=1}^L\sqrt{k_i^{(\varkappa_2-\mu\alpha/2)}}\right)\cdot\sqrt{k_l^{(-\varkappa_2-\mu\alpha/2)}}.$$
For total number of level we have from \eqref{eq:mlmc2}:
$$\left(k_0\cdot\theta^{L}\right)^{-\mu(1+\alpha)/2}\le\varepsilon\Rightarrow L\ge \frac{2}{\mu(1+\alpha)}\log_\theta \left(\varepsilon^{-1}\cdot k_0^{-\mu(1+\alpha)/2}\right).$$
Now we can rewrite \eqref{eq:mlmc1} as
$$\sum\limits_{l=0}^L k_l^{\varkappa_1+1} + n_l\cdot k_l^{\varkappa_2}\asymp k_L^{\varkappa_1+1}+\varepsilon^{-2}\cdot\left(\sum\limits_{l=1}^L\sqrt{k_l^{(\varkappa_2-\mu\alpha/2)}}\right)^2,$$
so we will have three cases.
\begin{enumerate}
\item[Case 1.] $2\cdot\varkappa_2=\mu\alpha.$
\begin{align*}
k_L^{\varkappa_1+1}+\varepsilon^{-2}\cdot\left(\sum\limits_{l=1}^L\sqrt{k_l^{(\varkappa_2-\mu\alpha/2)}}\right)^2 & \gtrsim k_L^{\varkappa_1+1}+\varepsilon^{-2}\cdot L^2 \\
& \geq\varepsilon^{-\frac{2\cdot(\varkappa_1+1)}{\mu(1+\alpha)}} +\varepsilon^{-2}\cdot L^2
\end{align*}
\item[Case 2.] $2\cdot\varkappa_2<\mu\alpha.$
\begin{align*}
k_L^{\varkappa_1+1}+\varepsilon^{-2}\cdot\left(\sum\limits_{l=1}^L\sqrt{k_l^{(\varkappa_2-\mu\alpha/2)}}\right)^2 &\gtrsim k_L^{\varkappa_1+1}+\varepsilon^{-2}\\ 
&\geq\varepsilon^{-\frac{2\cdot(\varkappa_1+1)}{\mu(1+\alpha)}} +\varepsilon^{-2}
\end{align*}
\item[Case 3.]   $2\cdot\varkappa_2>\mu\alpha.$
\begin{align*}
k_L^{\varkappa_1+1}+\varepsilon^{-2}\cdot\left(\sum\limits_{l=1}^L\sqrt{k_l^{(\varkappa_2-\mu\alpha/2)}}\right)^2 & \gtrsim k_L^{\varkappa_1+1}+\varepsilon^{-2}\cdot k_L^{\varkappa_2-\mu\alpha/2}\\ 
&\geq \varepsilon^{-\frac{2\cdot(\varkappa_1+1)}{\mu(1+\alpha)}} +\varepsilon^{-2-\frac{2\varkappa_2-\mu\alpha}{\mu(1+\alpha)}}
\end{align*}
\end{enumerate}
Combining all three cases one will get \eqref{eq:mlmc_comp}.

\end{document}